\newtheorem{proposition}{\em Proposition}
\newtheorem{theorem}{\em Theorem}
\newtheorem{conjecture}{\em Conjecture}
\newtheorem{definition}{\em Definition}
\journal{Sample Journal}
\begin{document}

\begin{frontmatter}

\title{Normal $5$-edge-colorings of a family of Loupekhine snarks}

\author[label1]{Luca Ferrarini}
\address[label1]{Dipartimento di Informatica,
Universita degli Studi di Verona, Strada le Grazie 15, 37134 Verona, Italy}

\cortext[cor1]{Corresponding author}

\ead{luca.ferrarini\_01@studenti.univr.it}

\author[label1]{Giuseppe Mazzuoccolo\corref{cor1}}

\ead{giuseppe.mazzuoccolo@univr.it}

\author[label1]{Vahan Mkrtchyan}
\ead{vahanmkrtchyan2002@ysu.am}


\begin{abstract}
In a proper edge-coloring of a cubic graph an edge $uv$ is called poor or rich, if the set of colors of the edges incident to $u$ and $v$ contains exactly three or five colors, respectively. An edge-coloring of a graph is normal, if any edge of the graph is either poor or rich. 
In this note, we show that some snarks constructed by using a method introduced by Loupekhine admit a normal edge-coloring with five colors. The existence of a Berge-Fulkerson Covering for a part of the snarks considered in this paper was recently proved by Manuel and Shanthi (2015). Since the existence of a normal edge-coloring with five colors implies the existence of a Berge-Fulkerson Covering, our main theorem can be viewed as a generalization of their result. 

\end{abstract}

\begin{keyword}
Cubic graph \sep Petersen coloring conjecture \sep normal edge-coloring \sep class of snarks
\end{keyword}

\end{frontmatter}



\section{Introduction}
\label{sec:intro}

In graph theory the Petersen coloring conjecture asserts that the edges of every bridgeless cubic graph $G$ can be colored by using as set of colors the edge-set of the Petersen graph $P$, such that adjacent edges of $G$ are colored with adjacent edges of $P$. The conjecture is well-known and is considered hard to prove as it implies classical conjectures in the field such as the Berge-Fulkerson conjecture and the Cycle Double Cover conjecture (see \cite{Fulkerson,Jaeger1985,Zhang1997}). Jaeger in \cite{Jaeger1985} introduced an equivalent formulation of the Petersen coloring conjecture. More precisely, he showed that a bridgeless cubic graph satisfies this conjecture, if and only if, it admits a normal edge-coloring with at most five colors (see exact definitions later). Let $\chi'_{N}(G)$ denote the minimum number of colors in a normal edge-coloring of $G$. Usually, it is called the normal chromatic index of $G$. In this terms, the Petersen coloring conjecture amounts to proving that every bridgeless cubic graph has normal chromatic index at most five. To the best of our knowledge, the smallest known upper bound for $\chi'_{N}(G)$ in an arbitrary bridgeless cubic graph is $7$ (see \cite{Bilkova12,Normal7flows}). 
The situation is similar in the larger class of all simple cubic graphs (not necessarily bridgeless): there are cubic graphs with normal chromatic index $7$, on the other hand, in \cite{Normal7flows} two of the authors have shown that any simple cubic graph admits a normal $7$-edge-coloring. 
One may wonder whether the upper bound $7$ can be improved in other interesting subclasses of cubic graphs. Related with this question, in this paper we show that a class of snarks obtained with a method introduced by Loupekhine, admits a normal edge-coloring with five colors. 
Let us remark that this result implies the main result in \cite{ManSha}.

Now, let us give the all necessary definitions and notions used in the paper. We consider finite and undirected graphs. They do not contain neither loops nor parallel edges. 

For a graph $G$, the set of vertices and edges of $G$ are denoted by $V(G)$ and $E(G)$, respectively. Let $\partial_{G}(v)$ be the set of edges of $G$ that are incident to the vertex $v$ of $G$. Assume that $G$ and $H$ are two cubic graphs. If there is a mapping $\phi:E(G)\rightarrow E(H)$, such that for each $v\in V(G)$ there is $w\in V(H)$ such that $\phi(\partial_{G}(v)) = \partial_{H}(w)$, then we refer to $\phi$ as an $H$-coloring of $G$. If $G$ admits an $H$-coloring, then we write $H\prec G$. 



    \begin{figure}[ht]
	\begin{center}
	\begin{tikzpicture}[style=thick]
\draw (18:2cm) -- (90:2cm) -- (162:2cm) -- (234:2cm) --
(306:2cm) -- cycle;
\draw (18:1cm) -- (162:1cm) -- (306:1cm) -- (90:1cm) --
(234:1cm) -- cycle;
\foreach \x in {18,90,162,234,306}{
\draw (\x:1cm) -- (\x:2cm);
\draw[fill=black] (\x:2cm) circle (2pt);
\draw[fill=black] (\x:1cm) circle (2pt);
}
\end{tikzpicture}
	\end{center}
	\caption{The graph $P_{10}$.}\label{fig:Petersen10}
\end{figure}
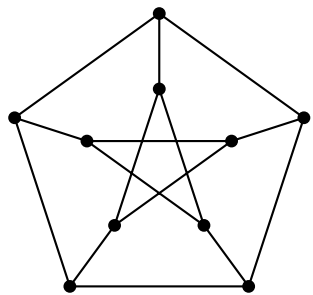

Let $P_{10}$ be the Petersen graph (Figure \ref{fig:Petersen10}). The Petersen coloring conjecture of Jaeger states:
\begin{conjecture}\label{conj:P10conj} (Jaeger, 1988 \cite{Jaeger1988}) For any bridgeless cubic
graph $G$, we have $P_{10} \prec G$.
\end{conjecture}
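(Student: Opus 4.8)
The plan is to attack the conjecture through Jaeger's equivalent reformulation already recalled in the introduction: it suffices to show that every bridgeless cubic graph $G$ admits a normal edge-coloring with at most five colors, i.e. $\chi'_{N}(G)\le 5$. First I would dispose of the $3$-edge-colorable case, which is essentially free. If $G$ has a proper $3$-edge-coloring with color classes $a,b,c$, then fixing any vertex $w\in V(P_{10})$ with incident edges $x,y,z$ and sending every edge of color $a$ (resp. $b$, $c$) to $x$ (resp. $y$, $z$) yields a map $\phi$ with $\phi(\partial_{G}(v))=\{x,y,z\}=\partial_{P_{10}}(w)$ for every $v$, so that $P_{10}\prec G$. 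Equivalently, in a proper $3$-edge-coloring only three colors appear around any edge, hence every edge is poor and the coloring is already normal. This reduces the whole conjecture to the class of snarks, and, via standard cut arguments on short cycles and small cuts, to cyclically $4$-edge-connected snarks of girth at least five.

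Second, I would try to turn those cut arguments into genuine reductions by proving a gluing lemma for normal colorings: given a small cyclic edge-cut separating $G$ into two pieces, one completes each side to a smaller bridgeless cubic graph, seeks normal $5$-edge-colorings of the pieces whose restrictions to the cut edges agree (same colors and the same poor/rich status across the interface), and then recombines them into a normal $5$-edge-coloring of $G$. For cuts of size $2$ and $3$ this should be tractable, since the colors on so few edges can be matched up after permuting the five colors and a local recoloring; the subtle point is to simultaneously control the poor/rich type of the cut edges so that normality survives at the interface. Carrying this out would leave only cyclically $5$-edge-connected snarks of high girth, where all the difficulty is concentrated.

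For that irreducible core my intended line of attack is to improve the known bound incrementally. By the cited results every (even simple) cubic graph already admits a normal $7$-edge-coloring \cite{Normal7flows,Bilkova12}, so the remaining task is the purely combinatorial one of eliminating two of the seven colors by recolorings that preserve normality. Concretely, I would search for alternating structures, adapted analogues of Kempe chains respecting the poor/rich dichotomy, along which two color classes can be merged without producing an edge that is neither poor nor rich, and I would try to show that such structures are always available in a cyclically $5$-edge-connected snark, ideally by a discharging argument ruling out a finite list of obstructing configurations.

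The hard part, and the reason the statement is a famous open conjecture rather than a theorem, is precisely this last step: no finite set of reducible configurations and no recoloring procedure is known that provably drives the normal chromatic index below $7$ for all bridgeless cubic graphs. Every attempt to merge colors can locally force an edge out of the poor/rich classes, and the cyclically $5$-edge-connected, high-girth snarks resist all current reduction, flow, and matching-covering techniques; individual families, such as the Loupekhine snarks treated in the present paper, can be handled by hand or by computer, but a uniform argument is out of reach. I would therefore expect any honest resolution to require a genuinely new idea beyond the reduction-plus-recoloring scheme sketched above.
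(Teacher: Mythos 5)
The statement you were asked about is not a theorem of the paper at all: it is Jaeger's Petersen coloring conjecture, which the paper explicitly states as an open conjecture and only \emph{verifies} for a restricted family of Loupekhine snarks (Theorem~\ref{thm:mainresult}). So there is no proof in the paper to compare against, and your proposal, by your own admission in its final paragraph, is a research program rather than a proof. The parts you do argue are fine: the $3$-edge-colorable case is correct (in a proper $3$-edge-coloring every edge is poor, hence the coloring is normal, which is exactly the remark the paper makes after Conjecture~\ref{conj:5NormalConj}), and the reduction to cyclically $4$-edge-connected snarks of girth at least $5$ is standard folklore for minimal counterexamples. But everything after that is a gap, not a sketch with details omitted.

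Two concrete reasons the decisive step fails as described. First, your plan to go from the known normal $7$-edge-colorings of \cite{Normal7flows} down to $5$ colors ``incrementally'' has no intermediate rungs: it is not even known that every bridgeless cubic graph admits a normal $6$-edge-coloring, so there is no ladder from $7$ to $5$ to descend one color at a time. Second, Kempe-chain-style recoloring is structurally ill-suited to normality, because whether an edge $uv$ is poor or rich depends on $S_c(u)\cup S_c(v)$, i.e.\ on colors of edges at distance up to two; swapping two colors along an alternating chain can destroy normality on edges merely \emph{adjacent} to the chain, which the classical Kempe argument never has to control. This is precisely why the known results (including this paper) proceed by exhibiting explicit colorings on structured families, and why a discharging/reducible-configurations scheme for normal $5$-edge-colorings has never been made to work. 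Your honest acknowledgement of this is correct, but it means the proposal establishes only the easy direction and should not be presented as a proof attempt for the conjecture itself.
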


It is shown in \cite{Mkrt2013} that the Petersen graph is the only connected, bridgeless cubic graph that can color all bridgeless cubic graphs. The conjecture is difficult to prove, since it implies the classical Berge-Fulkerson conjecture \cite{Fulkerson,Seymour} and (5,2)-cycle-cover conjecture \cite{Celmins1984,Preiss1981}.
For our aim, we only need to recall the statement of the former one.

\begin{conjecture}\label{conj:BF} (Berge-Fulkerson, 1972 ) Any bridgeless cubic graph $G$ contains six (not necessarily distinct) perfect matchings such that any edge of $G$ belongs to exactly two of them.
\end{conjecture}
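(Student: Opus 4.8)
The final statement is the Berge--Fulkerson conjecture, one of the central open problems in the theory of cubic graphs. No unconditional proof is known, so I will lay out the natural lines of attack, carry each as far as it goes, and indicate precisely where the argument stalls.

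\textbf{Reduction to snarks.} The first step is to dispose of the easy case. If $G$ is $3$-edge-colorable, its three color classes $M_1, M_2, M_3$ are perfect matchings that partition $E(G)$, so the six-term list $M_1, M_1, M_2, M_2, M_3, M_3$ consists of perfect matchings covering every edge exactly twice. Hence it suffices to treat bridgeless cubic graphs that are not $3$-edge-colorable, i.e.\ snarks. Standard edge-cut arguments (splitting along non-trivial $2$- and $3$-edge-cuts and recombining covers) then reduce the problem to cyclically $4$- or $5$-edge-connected snarks, so the plan is to establish these reductions and concentrate on the cyclically $5$-edge-connected case.

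\textbf{The fractional relaxation.} Rewriting the goal as $\tfrac{1}{6}\sum_{i=1}^{6}\mathbf{1}_{M_i} = \tfrac{1}{3}\mathbf{1}_{E}$ shows that the vector assigning $1/3$ to every edge must lie in the perfect matching polytope of $G$. By Edmonds' perfect matching polytope theorem this polytope is cut out by $x \ge 0$, the degree constraints $x(\partial_{G}(v)) = 1$, and the odd-set inequalities $x(\partial_{G}(S)) \ge 1$. For the constant vector $x_e \equiv 1/3$ the degree constraints hold because $G$ is cubic, and for any odd set $S$ the cut $\partial_{G}(S)$ has odd cardinality and, by bridgelessness, at least $3$ edges, so $x(\partial_{G}(S)) \ge 1$. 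Thus $\tfrac{1}{3}\mathbf{1}_{E}$ is a convex combination of perfect matchings, and the fractional version of the conjecture holds for every bridgeless cubic graph. This is the solid ground from which any attempt departs.

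\textbf{The structural route and the real obstacle.} A cleaner, more structural plan is to prove the Petersen coloring conjecture $P_{10} \prec G$ (Conjecture \ref{conj:P10conj}): the Petersen graph has six perfect matchings covering each of its edges exactly twice, and under an $H$-coloring $\phi$ with $H = P_{10}$ the preimage of a perfect matching of $P_{10}$ is again a perfect matching of $G$ (since $\phi$ is a bijection on each $\partial_{G}(v)$), so these six matchings pull back to six matchings of $G$ with the required covering property. Equivalently, one may aim for a normal $5$-edge-coloring. The decisive difficulty --- and the reason the statement remains a conjecture --- is \emph{integrality}: Edmonds' theorem furnishes a convex combination with uncontrolled denominators and possibly many matchings, whereas Berge--Fulkerson demands exactly six of them (equivalently, a representation with denominator dividing $6$). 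No general mechanism is known for passing from the fractional certificate to six integral matchings, and the Petersen-coloring reformulation is at least as hard, since it implies the conjecture but is itself open. Consequently the only realistic progress is to verify the conjecture on structured families, by exhibiting an explicit normal $5$-edge-coloring --- hence an explicit six-matching cover --- as is done in the present paper for Loupekhine snarks.
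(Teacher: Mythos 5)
You correctly recognize that this statement is the Berge--Fulkerson conjecture, which is open; the paper likewise offers no proof, stating it only as Conjecture \ref{conj:BF} and relying on the known implication from Petersen colorings, so there is no ``paper proof'' for your attempt to diverge from. Your partial arguments are all sound --- doubling the color classes in the $3$-edge-colorable case, checking that the constant vector $\tfrac{1}{3}\mathbf{1}_{E}$ satisfies Edmonds' degree and odd-cut constraints (odd cuts in a bridgeless cubic graph have odd size at least $3$), and pulling back the six perfect matchings of $P_{10}$ under an $H$-coloring --- and the last of these is precisely the sketch the paper gives for Proposition \ref{PimpliesBF}, so your treatment is consistent with the paper and correctly identifies the fractional-to-integral gap as the open obstacle.
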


We will call Berge-Fulkerson Covering of $G$ any set of six perfect matchings which satisfy the condition in Conjecture \ref{conj:BF}. 

\section{Normal 5-edge-colorings}

A $k$-edge-coloring of a graph $G$ is an assignment of colors $\{1,...,k\}$ to edges of $G$, such that adjacent edges receive different colors. For an edge-coloring $c$ of $G$ and a vertex $v$ of $G$, let $S_{c}(v)$ be the set of colors that edges incident to $v$ receive. 

\begin{definition}\label{def:poorrich}
Assume that $uv$ is an edge of a cubic graph $G$, and let $c$ be an edge-coloring of $G$. We say that the edge $uv$ is {\bf poor} or {\bf rich} with respect to $c$, if $|S_{c}(u)\cup S_{c}(v)|=3$ or $|S_{c}(u)\cup S_{c}(v)|=5$, respectively. An edge is {\bf normal} (with respect to $c$) if it is poor or rich.
\end{definition}

Edge-colorings in which all edges are poor are trivially $3$-edge-colorings of $G$. On the other hand, edge-colorings having only rich edges have been considered before, and they are called strong edge-colorings \cite{Andersen1992}. In the present paper, we focus on the case when all edges must be normal.

\begin{definition}\label{def:normal}
An edge-coloring $c$ of a cubic graph is {\bf normal}, if any edge is normal with respect to $c$. 
\end{definition} 

In \cite{Jaeger1985}, Jaeger has shown that:

\begin{proposition}\label{prop:JaegerNormalColor}(Jaeger, \cite{Jaeger1985}) Let $G$ be any cubic graph. Then $P_{10}\prec G$, if and only if $G$ admits a normal $5$-edge-coloring.
\end{proposition}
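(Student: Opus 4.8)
The plan is to pass through a single, explicitly coloured copy of the Petersen graph and to exploit the fact that the $10$ vertices of $P_{10}$ are in bijection with the $10$ three-element subsets of $\{1,\dots,5\}$. Concretely, I would first realise $P_{10}$ as the ``complementary Kneser'' graph: take as vertices the $\binom{5}{3}=10$ three-subsets $A\subseteq\{1,\dots,5\}$ and join $A$ to $B$ exactly when $|A\cap B|=1$; complementing subsets turns this into the usual Kneser description of the Petersen graph, so this really is $P_{10}$. On this model there is a canonical edge-colouring $c_0$: colour the edge $AB$ with the unique element of $A\cap B$. A short check shows that the three edges at a vertex $A$ receive precisely the three colours of $A$, so $c_0$ is proper with $S_{c_0}(A)=A$; and for every edge $AB$ one has $S_{c_0}(A)\cup S_{c_0}(B)=A\cup B$ of size $3+3-1=5$, so every edge is rich and $c_0$ is normal.

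For the implication ``$P_{10}\prec G \Rightarrow G$ has a normal $5$-edge-colouring'', I would simply set $c=c_0\circ\phi$, where $\phi\colon E(G)\to E(P_{10})$ is a given $P_{10}$-colouring. Since $\phi$ restricts to a bijection of $\partial_G(v)$ onto some $\partial_{P_{10}}(w)$, and $c_0$ is proper, $c$ is a proper $5$-edge-colouring with $S_c(v)=S_{c_0}(w)$. For an edge $uv$, its image $\phi(uv)$ is incident to both $w_u$ and $w_v$, whence $S_c(u)\cup S_c(v)=S_{c_0}(w_u)\cup S_{c_0}(w_v)$; this set has size $3$ or $5$ because $c_0$ is normal, so $uv$ is normal. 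Hence $c$ is a normal $5$-edge-colouring.

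The converse is where the real content lies. Given a normal $5$-edge-colouring $c$ of $G$, each $S_c(v)$ is a three-subset of $\{1,\dots,5\}$, i.e.\ a vertex of the model of $P_{10}$. I would define $\phi(uv)$ to be the unique edge of $P_{10}$ incident to the vertex $S_c(u)$ whose $c_0$-colour equals $c(uv)$ (such an edge exists and is unique because $c(uv)\in S_c(u)=S_{c_0}(S_c(u))$ and $c_0$ is proper). The whole point is to check this is well defined, i.e.\ that computing $\phi(uv)$ from the endpoint $v$ yields the same edge; here the poor/rich dichotomy enters. Writing $A=S_c(u)$, $B=S_c(v)$ and $x=c(uv)\in A\cap B$, normality forces $|A\cap B|\in\{1,3\}$: if $uv$ is rich then $A\cap B=\{x\}$ and one checks $B=\{x\}\cup(\{1,\dots,5\}\setminus A)$, so the edge at $A$ of colour $x$ is exactly $AB$; if $uv$ is poor then $A=B$ and both computations name the same edge at $A$. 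Either way $\phi$ is well defined, and by construction $\phi$ carries the three edges at $v$ onto the three edges at the vertex $S_c(v)$, i.e.\ $\phi(\partial_G(v))=\partial_{P_{10}}(S_c(v))$; thus $\phi$ is a $P_{10}$-colouring and $P_{10}\prec G$.

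I expect the main obstacle to be the reverse direction, and specifically the treatment of poor edges: there the two endpoints $u,v$ satisfy $S_c(u)=S_c(v)$, so they are sent to a \emph{single} vertex of $P_{10}$, yet $\phi(uv)$ must still be a genuine edge of $P_{10}$ and must agree whether read off from $u$ or from $v$. Getting the bookkeeping of this degenerate case right, together with verifying the identification of $P_{10}$ with the intersection-graph on three-subsets and the properties of the canonical colouring $c_0$, is the crux; the forward direction is then essentially a one-line composition.
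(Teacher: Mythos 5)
Your argument is correct and complete; it is the standard proof of Jaeger's equivalence via the model of $P_{10}$ whose vertices are the ten $3$-subsets of $\{1,\dots,5\}$ (adjacent when the intersection is a singleton), with the canonical colouring $c_0(AB)=A\cap B$, and your treatment of the poor case (both endpoints mapping to the same vertex of $P_{10}$, which the definition of $H$-colouring permits) is exactly the point that needs care. The paper itself states this proposition as a known result of Jaeger and gives no proof, so there is nothing to compare against beyond noting that your write-up is a valid self-contained justification of the cited fact.
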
 

This implies that Conjecture \ref{conj:P10conj} can be re-stated in the following way:

\begin{conjecture}\label{conj:5NormalConj} For any bridgeless cubic graph $G$, $\chi'_{N}(G)\leq 5$.
\end{conjecture}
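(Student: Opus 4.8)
The plan is to use Proposition \ref{prop:JaegerNormalColor} to convert the statement into its coloring form: it suffices to prove $P_{10}\prec G$ for every bridgeless cubic graph $G$, or equivalently to exhibit a normal $5$-edge-coloring of each such $G$. I would begin by isolating the trivial case. If $G$ is $3$-edge-colorable, a proper $3$-edge-coloring $c$ with colors $\{1,2,3\}$ satisfies $S_c(v)=\{1,2,3\}$ at every vertex, so $|S_c(u)\cup S_c(v)|=3$ for every edge $uv$; every edge is poor, the coloring is normal, and $\chi'_N(G)\le 3$. Thus the whole problem lives inside the class of snarks, the bridgeless cubic graphs of chromatic index $4$, and the five colors are genuinely needed only to repair the obstruction to $3$-edge-colorability.

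Next I would pass to a minimal counterexample and try to force high connectivity. Assuming the conjecture fails, let $G$ be a counterexample with $|V(G)|$ minimum. The standard reduction is to show that $G$ must be cyclically $5$-edge-connected: if $G$ admits a nontrivial edge cut of size $k\le 4$, one cuts along it, completes each side to a smaller bridgeless cubic graph, normal-colors the two sides by minimality, and reassembles. The decisive ingredient is a gluing lemma stating that two normal $5$-edge-colorings can be combined across the cut whenever their \emph{boundary patterns} (the colors, up to a permutation of $\{1,\dots,5\}$, together with the poor/rich type induced on the cut edges) are compatible. Since a cut of size at most four admits only finitely many boundary patterns, the reduction succeeds provided one proves that each side realizes a \emph{sufficiently rich} set of boundary patterns; establishing this realizability — essentially a palette-exchange argument on the Petersen graph viewed as the target of the coloring — is the first substantial technical step.

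The residual case is a cyclically $5$-edge-connected snark of girth at least five, and here lies the main obstacle. For such irreducible snarks no uniform construction of a Petersen coloring is known, and the strongest general result produces only a normal $7$-edge-coloring \cite{Normal7flows}; the gap between $7$ and $5$ is precisely the unproven content of the conjecture. Two lines of attack seem most promising. The first is inductive over graph operations: identify a set of operations (dot products, superposition, the Loupekhine construction treated in this paper) that generates all snarks and under which normal $5$-colorability is preserved, reducing the statement to a base family. The second exploits the extremal role of $P_{10}$ established in \cite{Mkrt2013} — since the Petersen graph is the unique connected bridgeless cubic graph that colors every bridgeless cubic graph, one would aim to show that any minimal non-normally-$5$-colorable snark must carry a forbidden boundary configuration that $P_{10}$ cannot accommodate, deriving a contradiction.

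The hard part, and the reason the conjecture remains open, is exactly the irreducible-snark case above: neither the operation-generating approach nor the $P_{10}$-extremality approach currently closes, because there is no known finite base of irreducible snarks and no known local certificate forcing a normal $5$-edge-coloring. Controlling the realizable boundary patterns well enough to push the induction past every irreducible snark is where I expect the argument to stall, and a genuinely new structural insight — rather than a refinement of the existing $7$-color method of \cite{Normal7flows} — appears to be required to complete it.
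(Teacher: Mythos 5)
You have not produced a proof, and indeed none exists: the statement you were given is Conjecture~\ref{conj:5NormalConj}, which is Jaeger's reformulation (via Proposition~\ref{prop:JaegerNormalColor}) of the Petersen coloring conjecture, and it is open. The paper itself never attempts a proof; it only establishes the conjecture for a restricted subclass of Loupekhine snarks (Theorem~\ref{thm:mainresult} and its $LP_2$ analogue), namely those $LP_1$- and $LP_2$-snarks whose blocks admit a partition into three odd subsequences compatible with the $K_2$-components of $G_k^C$. Your opening reduction is sound and matches the paper's remarks: a proper $3$-edge-coloring makes every edge poor, so $\chi'_N(G)\le 3$ for $3$-edge-colorable $G$, and the problem concentrates on snarks. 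Your candid admission in the final paragraph that the irreducible case stalls is also accurate. But a proposal that correctly locates the difficulty and then declares it unresolved is not a proof of the statement, so as a proof attempt this fails in its entirety beyond the trivial case.

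Concretely, the two load-bearing steps you invoke are both unproven. First, your minimal-counterexample reduction to cyclically $5$-edge-connected snarks rests on a ``gluing lemma'' asserting that normal $5$-edge-colorings of the two sides of a cut of size at most $4$ can be matched along compatible boundary patterns, provided each side realizes a sufficiently rich set of patterns. No such realizability result is known for normal $5$-edge-colorings: unlike proper edge-colorings, the palette at a cut edge carries the poor/rich constraint involving \emph{both} endpoints' color sets, and Kempe-style palette exchanges in the target $P_{10}$ are not known to generate enough boundary types; this is precisely why even the reduction of the Petersen coloring conjecture to high cyclic connectivity is not a routine matter. Second, your two ``lines of attack'' for the residual case (operation-generated snark families, and the extremality of $P_{10}$ from \cite{Mkrt2013}) are research directions, not arguments: there is no known generating set of operations preserving normal $5$-colorability with a finite base, and \cite{Mkrt2013} says only that $P_{10}$ is the unique connected bridgeless cubic graph that could color all bridgeless cubic graphs --- it gives no mechanism for deriving a contradiction from a minimal counterexample. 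The honest comparison with the paper is this: the authors sidestep all of the above by constructing explicit colorings (Figures~\ref{Loupekhine first subseq}--\ref{Loupekhine third subseq}) for one concrete family, and they explicitly state that the conjecture remains open even for the class of $LP_1$-snarks; your proposal, by contrast, aims at the full conjecture and, by your own account, cannot close it.
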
 

Our previous considerations imply that Conjecture \ref{conj:5NormalConj} holds for $3$-edge-colorable cubic graphs. This means that the non-$3$-edge-colorable cubic graphs are the main obstacle for proving Conjecture \ref{conj:5NormalConj}. Let us note that in \cite{HaggSteff2013} Conjecture \ref{conj:5NormalConj} is verified for some non-$3$-edge-colorable bridgeless cubic graphs. Also, in \cite{Samal2011} the percentage of edges of a bridgeless cubic graph, which can be made normal in a 5-edge-coloring, is estimated.

The following well-known result says that Conjecture \ref{conj:5NormalConj} implies Conjecture \ref{conj:BF}.

\begin{proposition}\label{PimpliesBF}
Let $G$ be a bridgeless cubic graph. If $G$ admits a normal $5$-edge-coloring, then it admits a Berge-Fulkerson covering.  
\end{proposition}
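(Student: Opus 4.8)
The plan is to exploit the equivalence between normal $5$-edge-colorings and $P_{10}$-colorings furnished by Proposition \ref{prop:JaegerNormalColor}, and then to pull back a Berge-Fulkerson covering that already lives inside the Petersen graph itself.

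First, since $G$ admits a normal $5$-edge-coloring, Proposition \ref{prop:JaegerNormalColor} gives $P_{10} \prec G$, that is, a $P_{10}$-coloring $\phi : E(G)\rightarrow E(P_{10})$. By definition, for every $v\in V(G)$ there is a vertex $w\in V(P_{10})$ with $\phi(\partial_{G}(v)) = \partial_{P_{10}}(w)$; since both sets have exactly three elements, the restriction of $\phi$ to $\partial_{G}(v)$ is a bijection onto $\partial_{P_{10}}(w)$. This local bijectivity is the only structural property of $\phi$ that I shall use.

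Next I would record the classical fact that the Petersen graph carries its own Berge-Fulkerson covering: $P_{10}$ has exactly six perfect matchings $M_{1},\dots,M_{6}$, and each of its $15$ edges lies in exactly two of them (a count consistent with this is $6\cdot 5 = 30 = 2\cdot 15$). The decisive step is then to show that the pullback $\phi^{-1}(M_{i})$ is a perfect matching of $G$ for each $i$. Indeed, fix $v\in V(G)$ with associated $w\in V(P_{10})$. Exactly one edge of $\partial_{P_{10}}(w)$ belongs to $M_{i}$, and because $\phi$ maps $\partial_{G}(v)$ bijectively onto $\partial_{P_{10}}(w)$, exactly one edge of $\partial_{G}(v)$ lies in $\phi^{-1}(M_{i})$. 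As this holds at every vertex, $\phi^{-1}(M_{i})$ is a perfect matching of $G$.

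Finally, the covering property transfers verbatim: an edge $e\in E(G)$ lies in $\phi^{-1}(M_{i})$ precisely when $\phi(e)\in M_{i}$, and $\phi(e)$ belongs to exactly two of the $M_{i}$; hence $e$ belongs to exactly two of the matchings $\phi^{-1}(M_{1}),\dots,\phi^{-1}(M_{6})$. Therefore these six perfect matchings form a Berge-Fulkerson covering of $G$ in the sense of Conjecture \ref{conj:BF}. I do not expect a genuine obstacle here; the only points requiring care are the verification that $P_{10}$ itself possesses six perfect matchings covering every edge exactly twice, and that the pullback of a perfect matching remains a perfect matching — both of which reduce to the local bijectivity of the $P_{10}$-coloring noted above.
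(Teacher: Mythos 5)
Your proposal is correct and follows exactly the route of the paper's own (sketched) proof: convert the normal $5$-edge-coloring into a $P_{10}$-coloring via Proposition \ref{prop:JaegerNormalColor} and pull back the six perfect matchings of the Petersen graph. You merely supply the details (local bijectivity of $\phi$ on edge-neighbourhoods, the count $6\cdot 5 = 2\cdot 15$) that the paper leaves implicit.
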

\begin{proof}
{ \it (Sketch)} The existence of a normal $5$-edge-coloring is equivalent to the existence of a Petersen Coloring $\phi:E(G)\rightarrow E(P)$ of $G$. The preimages of the six distinct perfect matchings of $P$ give a Berge-Fulkerson covering of $G$ (see also \cite{Mkrt2013}). 
\end{proof}

\section{Loupekhine snarks}
\label{sec:aux}

In many contexts, among all non-$3$-edge-colorable cubic graphs, ones with some additional restrictions are more studied in order to avoid trivial cases. Even if there are many interesting definitions of what non-trivial could mean (see for instance \cite{FiMaSt,NedSko}), here we stick to the largely used definition of snark as a non-$3$-edge-colorable cyclically $4$-edge-connected cubic graph of girth at least $5$. 

A classical method to construct snarks was proposed by Loupekhine in early 70s and it appears firstly in a paper of Isaacs in 1976 (see \cite{Isaacs1976}).

Here, we recall the general method and we focus our attention on some families of snarks which can be constructed by using such a method. In particular, we will make use of the same terminology and notations in \cite{Loupekhine}. 

Let $G$ be a snark. We denote by $B(G)$ any induced subgraph of $G$ obtained by removing the three vertices of a path of length $2$ in $G$. Since $G$ has girth at least $5$, the subgraph $B(G)$ has exactly $5$ vertices of degree $2$, while every other vertex has degree equal to $3$.

Let $G_1,G_2,\ldots,G_k$ be arbitrary snarks. Consider the disjoint union of the $k$ graphs $B_i=B(G_i)$. For all $i=1,\ldots,k$, add two new edges between two of the vertices of degree $2$ in $B_i$ with two of the vertices of degree $2$ in $B_{i+1}$ (indices larger than $k$ are taken modulo $k$) in such a way that the set of these new edges is a matching in the resulting graph $G^B_k$.    

The new graph $G_k^B$ has exactly $k$ vertices of degree $2$, say $w_1, w_2,\ldots,w_k$.
Let $G_k^C$ be a graph with exactly $k$ vertices of degree one, say $z_1,z_2,\ldots,z_k$, and all other vertices of degree $3$.

If we identify every vertex $w_i$ of $G_k^B$ with a vertex $z_i$ of $G_k^C$, we obtain a cubic graph, denoted by $G^L_k$. 

\begin{proposition}(\cite{Isaacs1976})
If $k$ odd, then $G^L_k$ is a snark.
\end{proposition}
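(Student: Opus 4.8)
The plan is to show that $G^L_k$ is a snark by verifying the three defining properties: cyclic $4$-edge-connectivity, girth at least $5$, and most importantly that $G^L_k$ is not $3$-edge-colorable. The parity assumption that $k$ is odd should enter precisely in the non-$3$-edge-colorability argument, so I expect that to be the crux.

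First I would set up the coloring obstruction via the local structure of each block $B_i = B(G_i)$. Since $B_i$ is obtained from the snark $G_i$ by deleting a path of length $2$, it has exactly five vertices of degree $2$. The key observation to establish is a statement about how a proper $3$-edge-coloring of $G^L_k$ must behave when restricted to each $B_i$. Using colors from $\{1,2,3\}$ as elements of the group $\mathbb{Z}_2 \times \mathbb{Z}_2 \setminus \{0\}$ (the standard Parity Lemma / nowhere-zero $\mathbb{Z}_2\times\mathbb{Z}_2$-flow viewpoint), a proper $3$-edge-coloring of a cubic graph is equivalent to a nowhere-zero $4$-flow, and Kempe-chain / parity arguments control the multiset of colors leaving any edge cut. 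Because each $G_i$ is itself a snark, no $3$-edge-coloring of the full graph can restrict to $B_i$ in a way that would extend to a proper coloring of the original $G_i$; I would translate this into a constraint that forces a specific parity on the colors appearing on the connector edges attached to $B_i$.

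The heart of the argument is then a global parity count around the cycle of blocks. The blocks $B_1,\ldots,B_k$ are joined cyclically by a matching of connector edges (each $B_i$ linked to $B_{i+1}$, indices mod $k$), and the free degree-$2$ vertices $w_i$ are identified with the degree-$1$ vertices $z_i$ of $G_k^C$. I would argue that each block $B_i$, being a snark-derived gadget, imposes a parity condition (an odd constraint) on the colors of the edges crossing its boundary. Summing these $k$ local constraints and telescoping around the cyclic matching, the contributions from the matching edges cancel in pairs (each such edge is counted by the two blocks it joins), leaving a residual parity equal to $k \bmod 2$. When $k$ is odd this residual is nonzero, producing a contradiction, which is exactly why no proper $3$-edge-coloring can exist. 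The main obstacle, and the step needing the most care, is formalizing the local parity constraint contributed by each $B_i$ and verifying that the matching edges indeed cancel cleanly while the $G_k^C$-side identifications do not introduce an off-setting parity.

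Finally, with non-$3$-colorability in hand, I would dispatch the remaining snark conditions. Girth at least $5$ follows because each $B_i$ inherits girth $\ge 5$ from $G_i$ and the newly added matching edges and the identifications with $G_k^C$ are arranged to avoid creating short cycles (one would check that no cycle of length $\le 4$ can be routed through a connector edge, given that the attachment vertices were degree-$2$ vertices of distinct blocks). Cyclic $4$-edge-connectivity would be verified by showing that any edge cut of size at most $3$ separating two cyclic parts would induce a corresponding small cut inside some $G_i$, contradicting the cyclic $4$-edge-connectivity of the original snarks $G_i$; here I would treat the connector edges and the $G_k^C$ structure as a rigid frame and localize any hypothetical small cut to a single block. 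I expect these two properties to be routine structural checks, with all the real difficulty concentrated in the parity argument for non-$3$-edge-colorability.
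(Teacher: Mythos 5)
The paper offers no proof of this proposition; it is quoted from Isaacs \cite{Isaacs1976}, so your attempt can only be measured against the standard argument. Your overall architecture (Parity Lemma in the $\mathbb{Z}_2\times\mathbb{Z}_2$ setting, a local constraint per block, and a telescoping sum around the odd cycle of blocks) is the right one, but the crux is left as a placeholder, and the placeholder as you describe it would not work. The Parity Lemma applied to the $5$-edge cut around a block $B_i$ holds for \emph{every} properly $3$-edge-colored cubic graph, so ``each block imposes a parity condition on the colors crossing its boundary'' cannot by itself distinguish odd $k$ from even $k$; and if you literally sum the flow conditions $\sum_{e\in\partial B_i}c(e)=0$ over the $k$ blocks, the matching edges cancel and you are left with $\sum_i c(f_i)=0$ (where $f_i$ is the edge of $G_k^C$ at $z_i$), which is a tautology, not a contradiction. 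The snarkness of the $G_i$ must enter through a non-extendability statement, and the quantity that telescopes is not a color but a $0/1$ indicator.

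Concretely, the missing lemma is the following. Write the five cut edges of $B_i$ as the pair $P_{i-1}$ towards $B_{i-1}$, the pair $P_i$ towards $B_{i+1}$, and the single edge $f_i$ towards $G_k^C$ (these correspond to the deleted neighbours of $u$, of $w$, and of $v$, respectively, where $uvw$ is the removed path). Then in any proper $3$-edge-coloring \emph{exactly one} of $P_{i-1},P_i$ is monochromatic. ``At most one'' follows from the Parity Lemma alone: each color must occur an odd number of times in a $5$-cut, which rules out two monochromatic pairs. ``At least one'' is where the snarkness of $G_i$ enters: if both pairs are bicolored, then setting $c(uv)=c(e_1)+c(e_2)$ and $c(vw)=c(e_4)+c(e_5)$, and using that $c(uv)+c(vw)+c(f_i)=0$ forces three nonzero elements of $\mathbb{Z}_2\times\mathbb{Z}_2$ summing to zero to be pairwise distinct, one reconstructs a proper $3$-edge-coloring of $G_i$, a contradiction. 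With $m_i$ the indicator that $P_i$ is monochromatic, each block gives $m_{i-1}+m_i=1$, and summing around the cycle gives $k=2\sum_i m_i$, impossible for odd $k$. Finally, your treatment of girth and cyclic $4$-edge-connectivity is not as routine as you suggest: with $G_k^C$ completely arbitrary (as in the statement) the resulting graph can contain short cycles, so those properties need hypotheses or conventions not made explicit; the substance of the cited result is the non-$3$-edge-colorability.
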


All snarks obtained with the method described in previous proposition are known as Loupekhine snarks, or $L$-snarks.

\begin{definition}
Let $G$ be an $L$-snark. 
\begin{itemize}
 \item  If every connected component of $G_k^C$ is either an isolated edge or a star with three vertices of degree $1$, then $G$ is said to be an $L_1$-snark.
 \item If every subgraph $B_i$ is isomorphic to $B(P)$ (see Figure \ref{fig:unified}), where $P$ is the Petersen graph, then $G$ is said to be an $LP$-snark.  
 \item An $LP_1$-snark is a snark which is both an $L_1$-snark and an $LP$-snark.
\end{itemize}
  
\end{definition}

In \cite{ManSha}, the authors prove that a very special family of $LP_1$-snarks admits a Berge-Fulkerson covering. In the next section, we strongly generalize their result by proving the existence of a normal $5$-edge-coloring for a larger class of $LP_1$-snarks which properly contains the class of $LP_1$-snarks studied in \cite{ManSha}. Hence, by Proposition \ref{PimpliesBF}, we have that all of them admit also a Berge-Fulkerson covering. 

	\begin{figure}[!htbp]
		\centering
		\begin{tikzpicture}[scale=0.60]
		\coordinate (j) at (0,0);
		\coordinate (k)  at (5,0) ;
		\coordinate (v) at (1,1.5) ;
		\coordinate (w) at (4,1.5) ;
		\coordinate (z) at (4.5,4) ;
		\coordinate (x) at (0.5,4) ;
		\coordinate (x1) at (-1.5,5.2);
		\coordinate (z1) at (6.5,5.2);
		\draw[fill=black] (x1) circle [radius=0.15cm];
		\draw[fill=black] (z1) circle [radius=0.15cm];
		\coordinate  (y) at (2.5,5.5);
		\draw[fill=black] (x) circle [radius=0.15cm];
		\draw[fill=black] (y) circle [radius=0.15cm];
		\draw[fill=black] (z) circle [radius=0.15cm];
		\draw[fill=black] (w) circle [radius=0.15cm];
		\draw[fill=black] (v) circle [radius=0.15cm];
		\draw[fill=black] (j) circle [radius=0.15cm];
		\draw[fill=black] (k) circle [radius=0.15cm];
		\coordinate (j') at (9+3,0);
		\coordinate (k') at (14+3,0);
		\coordinate (v') at (10+3,1.5);
		\coordinate (w') at (13+3,1.5);
		\coordinate (z') at (13.5+3,4);
		\coordinate (x') at (9.5+3,4);
		\coordinate (y') at (11.5+3,5.5);
		\coordinate (n') at (2.5, 7.5);
		\draw[fill=black] (j') circle [radius=0.15cm];
		\draw[fill=black] (x') circle [radius=0.15cm];
		\draw[fill=black] (y') circle [radius=0.15cm];
		\draw[fill=black] (z') circle [radius=0.15cm];
		\draw[fill=black] (w') circle [radius=0.15cm];
		\draw[fill=black] (v') circle [radius=0.15cm];
		\draw[fill=black] (j') circle [radius=0.15cm];
		\draw[fill=black] (k') circle [radius=0.15cm];
		\draw[fill=black] (n') circle [radius=0.15cm];
		\draw (v)--(j)--(k)--(w)--(y)--(v)--(z)--(x)--(w);
		\draw (v')--(j')--(k')--(w')--(y')--(v')--(z')--(x')--(w');
		\draw[dashed] (n')--(x1);
		\draw[dashed] (n')--(z1)--(k);
		\draw[dashed] (n')--(y);
		\draw[dashed] (x1)--(j);
		\draw[dashed] (x)--(x1);
		\draw[dashed] (z)--(z1);
		
		\draw (z')--(13.5+5,4);
		\draw (k')--(16+3,0);
		\draw (x')--(10.5,4);
		\draw (j')--(10,0);
		\draw (y')--(11.5+3,7.6);
		\node[anchor=north west] at (2.1,8.4) {$\mathbf{b}$};
		\node[anchor=north west] at (-1.7,6) {$\mathbf{a}$};
		\node[anchor=north west] at (6.5,5.8) {$\mathbf{c}$};
		;
		
		\end{tikzpicture}
		
	\caption{The component $B(P)$}
	\label{fig:unified}
	\end{figure}
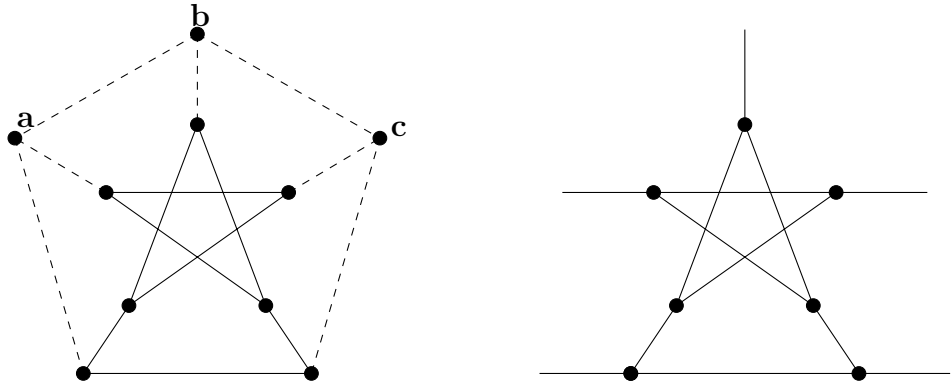

\section{Main result}
\label{sec:mainresult}

Now we construct a normal $5$-edge-colorings for a class of $LP_1$-snarks. Later, we will prove that our class properly contains all snarks considered in \cite{ManSha}.

For any odd $k$, let $G_k^L$ be an $LP_1$-snark, and, as described in Section \ref{sec:aux}, we consider its decomposition in the two subgraphs $G_k^B$ and $G_k^C$. 
Recall that, since $G_k^L$ is an $LP_1$-snark, $G_k^B$ is the disjoint union of copies of $B(P)$ and every connected component of $G_k^C$ is either $K_2$ or a star with four vertices.

Firstly, we construct a normal $5$-edge-coloring of $G_k^B$. We consider the $k$ consecutive copies of $B(P)$, say $B_1,\ldots,B_k$, and we subdivide them in three subsequences $B_1,\dots,B_r$, $B_{r+1},\ldots,B_{r+s}$ and $B_{r+s+1},\ldots,B_{k}$ with $r$ and $s$ odd. Note that all the three subsequences have an odd number of copies of $B(P)$ since $k$ is also odd. 

Color the subgraph of $G_k^B$ induced by the subsequence $B_1,\dots,B_r$ as follows:

	\begin{figure}[H]
		\begin{tikzpicture}[scale=0.54]
		\coordinate (j) at (0,0);
		\coordinate (k)  at (5,0) ;
		\coordinate (v) at (1,1.5) ;
		\coordinate (w) at (4,1.5) ;
		\coordinate (z) at (4.5,4) ;
		\coordinate (x) at (0.5,4) ;
		\coordinate  (y) at (2.5,5.5);
		\draw[fill=black] (x) circle [radius=0.15cm];
		\draw[fill=black] (y) circle [radius=0.15cm];
		\draw[fill=black] (z) circle [radius=0.15cm];
		\draw[fill=black] (w) circle [radius=0.15cm];
		\draw[fill=black] (v) circle [radius=0.15cm];
		\draw[fill=black] (j) circle [radius=0.15cm];
		\draw[fill=black] (k) circle [radius=0.15cm];
		\coordinate (j') at (9,0);
		\coordinate (k') at (14,0);
		\coordinate (v') at (10,1.5);
		\coordinate (w') at (13,1.5);
		\coordinate (z') at (13.5,4);
		\coordinate (x') at (9.5,4);
		\coordinate (y') at (11.5,5.5);
		\draw[fill=black] (j') circle [radius=0.15cm];
		\draw[fill=black] (x') circle [radius=0.15cm];
		\draw[fill=black] (y') circle [radius=0.15cm];
		\draw[fill=black] (z') circle [radius=0.15cm];
		\draw[fill=black] (w') circle [radius=0.15cm];
		\draw[fill=black] (v') circle [radius=0.15cm];
		\draw[fill=black] (j') circle [radius=0.15cm];
		\draw[fill=black] (k') circle [radius=0.15cm];
		\coordinate (j'') at (18+3,0);
		\coordinate (k'') at (23+3,0);
		\coordinate (v'') at (19+3,1.5);
		\coordinate (w'') at (22+3,1.5);
		\coordinate (z'') at (22.5+3,4);
		\coordinate (x'') at (18.5+3,4);
		\coordinate (y'') at (20.5+3,5.5);
		\draw[fill=black] (j'') circle [radius=0.15cm];
		\draw[fill=black] (x'') circle [radius=0.15cm];
		\draw[fill=black] (y'') circle [radius=0.15cm];
		\draw[fill=black] (z'') circle [radius=0.15cm];
		\draw[fill=black] (w'') circle [radius=0.15cm];
		\draw[fill=black] (v'') circle [radius=0.15cm];
		\draw[fill=black] (j'') circle [radius=0.15cm];
		\draw[fill=black] (k'') circle [radius=0.15cm];
		\draw (v)--(j)--(k)--(w)--(y)--(v)--(z)--(x)--(w);
		\draw (v')--(j')--(k')--(w')--(y')--(v')--(z')--(x')--(w');
		\draw (v'')--(j'')--(k'')--(w'')--(y'')--(v'')--(z'')--(x'')--(w'');
		\draw (k)--(j');
		\draw (z)--(x');
		\draw (z')--(16,4);
		\draw (k')--(16,0);
		\draw (19.5,4)--(x'');
		\draw (19,0)--(j'');
		
		\draw (-2,0)--(j);
		\draw (-1.5,4)--(x);
		\draw (y)--(2.5,8);
		\draw (y')--(11.5,8);
		\draw (y'')--(20.5+3,8);
		\draw (k'')--(25+3,0);
		\draw (z'')--(24.5+3,4);
		\node at (4.8,0.8) {$5$};
		\node at (0.2,0.8) {$4$};
		\node at (2.5,0.3) {$3$};
		\node at (-1,0.3) {$1$};
		\node at (7,0.3) {$2$};
		\node at (7,4.3) {$1$};
		\node at (2.1,3.1) {$4$};
		\node at (2.8,3.1) {$5$};
		\node at (2.5,4.3) {$3$};
		\node at (1.6,3.6) {$2$};
		\node at (3.4,3.6) {$1$};
		\node at (-0.5,4.3) {$2$};
		\node at (11.5,4.3) {$3$};
		\node at (16-1.5,4.3) {$2$};
		
		\node at (10.6,3.6) {$1$};
		\node at (12.4,3.6) {$2$};
		\node at (11.1,3.1) {$5$};
		\node at (11.8,3.1) {$4$};
		\node at (13.8,0.8) {$4$};
		\node at (9.2,0.8) {$5$};
		\node at (11.5,0.3) {$3$};
		\node at (16-1,0.3) {$1$};
		
		\node at (20.5+3,0.3) {$3$};
		\node at (18.2+3,0.8) {$4$};
		\node at (22.8+3,0.8) {$5$};
		\node at (24+3,0.3) {$2$};
		\node at (20.8+3,3.1) {$5$};
		\node at (20.1+3,3.1) {$4$};
		\node at (21.4+3,3.6) {$1$};
		\node at (19.6+3,3.6) {$2$};
		\node at (20.5+3,4.3) {$3$};
		\node at (23.5+3,4.3) {$1$};
		\node at (11.2,6.5) {$3$};
		\node at (2.2,6.5) {$3$};
		\node at (20.2+3,6.5) {$3$};
		
		\node at (21-1,0.3)  {$1$};
		\node at (20.7,4.3) {$2$};
		\coordinate (d) at (16,2);
		\draw[dashed]  (d)--(20,2);
		\end{tikzpicture}
		
		\caption{A normal $5$ edge-coloring in the first subsequence of $G_k^B$} \label{Loupekhine first subseq}
	\end{figure}
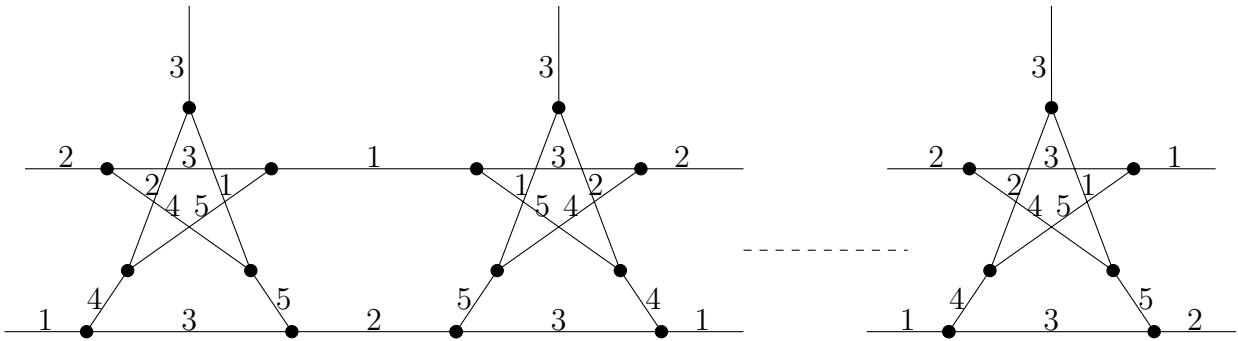	

Color the subgraph of $G_k^B$ induced by the subsequence $B_{r+1},\ldots,B_{r+s}$ as follows:

	\begin{figure}[H]
		\begin{tikzpicture}[scale=0.54]
		\coordinate (j) at (0,0);
		\coordinate (k)  at (5,0) ;
		\coordinate (v) at (1,1.5) ;
		\coordinate (w) at (4,1.5) ;
		\coordinate (z) at (4.5,4) ;
		\coordinate (x) at (0.5,4) ;
		\coordinate  (y) at (2.5,5.5);
		\draw[fill=black] (x) circle [radius=0.15cm];
		\draw[fill=black] (y) circle [radius=0.15cm];
		\draw[fill=black] (z) circle [radius=0.15cm];
		\draw[fill=black] (w) circle [radius=0.15cm];
		\draw[fill=black] (v) circle [radius=0.15cm];
		\draw[fill=black] (j) circle [radius=0.15cm];
		\draw[fill=black] (k) circle [radius=0.15cm];
		\coordinate (j') at (9,0);
		\coordinate (k') at (14,0);
		\coordinate (v') at (10,1.5);
		\coordinate (w') at (13,1.5);
		\coordinate (z') at (13.5,4);
		\coordinate (x') at (9.5,4);
		\coordinate (y') at (11.5,5.5);
		\draw[fill=black] (j') circle [radius=0.15cm];
		\draw[fill=black] (x') circle [radius=0.15cm];
		\draw[fill=black] (y') circle [radius=0.15cm];
		\draw[fill=black] (z') circle [radius=0.15cm];
		\draw[fill=black] (w') circle [radius=0.15cm];
		\draw[fill=black] (v') circle [radius=0.15cm];
		\draw[fill=black] (j') circle [radius=0.15cm];
		\draw[fill=black] (k') circle [radius=0.15cm];
		\coordinate (j'') at (18+3,0);
		\coordinate (k'') at (23+3,0);
		\coordinate (v'') at (19+3,1.5);
		\coordinate (w'') at (22+3,1.5);
		\coordinate (z'') at (22.5+3,4);
		\coordinate (x'') at (18.5+3,4);
		\coordinate (y'') at (20.5+3,5.5);
		\draw[fill=black] (j'') circle [radius=0.15cm];
		\draw[fill=black] (x'') circle [radius=0.15cm];
		\draw[fill=black] (y'') circle [radius=0.15cm];
		\draw[fill=black] (z'') circle [radius=0.15cm];
		\draw[fill=black] (w'') circle [radius=0.15cm];
		\draw[fill=black] (v'') circle [radius=0.15cm];
		\draw[fill=black] (j'') circle [radius=0.15cm];
		\draw[fill=black] (k'') circle [radius=0.15cm];
		\draw (v)--(j)--(k)--(w)--(y)--(v)--(z)--(x)--(w);
		\draw (v')--(j')--(k')--(w')--(y')--(v')--(z')--(x')--(w');
		\draw (v'')--(j'')--(k'')--(w'')--(y'')--(v'')--(z'')--(x'')--(w'');
		\draw (k)--(j');
		\draw (z)--(x');
		\draw (z')--(16,4);
		\draw (k')--(16,0);
		\draw (19.5,4)--(x'');
		\draw (19,0)--(j'');
		
		\draw (-2,0)--(j);
		\draw (-1.5,4)--(x);
		\draw (y)--(2.5,8);
		\draw (y')--(11.5,8);
		\draw (y'')--(20.5+3,8);
		\draw (k'')--(25+3,0);
		\draw (z'')--(24.5+3,4);
		\node at (4.8,0.8) {$2$};
		\node at (0.2,0.8) {$3$};
		\node at (2.5,0.3) {$5$};
		\node at (-1,0.3) {$2$};
		\node at (7,0.3) {$3$};
		\node at (2.1,3.1) {$4$};
		\node at (2.8,3.1) {$5$};
		
		\node at (2.5,4.3) {$2$};
		\node at (7,4.3) {$3$};
		\node at (1.6,3.6) {$2$};
		\node at (3.4,3.6) {$1$};
		\node at (-0.5,4.3) {$1$};
		\node at (11.5,4.3) {$2$};
		\node at (16-1.5,4.3) {$1$};
		
		\node at (10.6,3.6) {$1$};
		\node at (12.4,3.6) {$2$};
		\node at (11.1,3.1) {$5$};
		\node at (11.8,3.1) {$4$};
		\node at (13.8,0.8) {$3$};
		\node at (9.2,0.8) {$2$};
		\node at (11.5,0.3) {$5$};
		\node at (16-1,0.3) {$2$};
		
		\node at (20.5+3,0.3) {$5$};
		\node at (18.2+3,0.8) {$3$};
		\node at (22.8+3,0.8) {$2$};
		\node at (24+3,0.3) {$3$};
		\node at (20.8+3,3.1) {$5$};
		\node at (20.1+3,3.1) {$4$};
		\node at (21.4+3,3.6) {$1$};
		\node at (19.6+3,3.6) {$2$};
		\node at (20.5+3,4.3) {$2$};
		\node at (23.5+3,4.3) {$3$};
		\node at (11.2,6.5) {$4$};
		\node at (2.2,6.5) {$4$};
		\node at (20.2+3,6.5) {$4$};
		
		\node at (21-1,0.3)  {$2$};
		\node at (20.7,4.3) {$1$};
		\coordinate (d) at (16,2);
		\draw[dashed]  (d)--(20,2);
		\end{tikzpicture}
		
		\caption{A normal $5$ edge-coloring in the second subsequence of $G_k^B$} \label{Loupekhine second subseq}
	\end{figure}
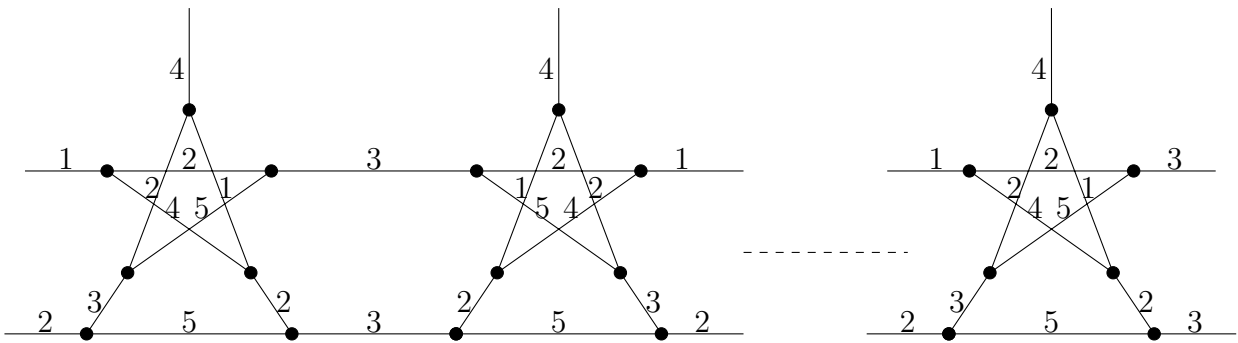	

Color the subgraph of $G_k^B$ induced by the subsequence $B_{r+s+1},\ldots,B_{k}$ as follows:

	\begin{figure}[H]
		\begin{tikzpicture}[scale=0.54]
		\coordinate (j) at (0,0);
		\coordinate (k)  at (5,0) ;
		\coordinate (v) at (1,1.5) ;
		\coordinate (w) at (4,1.5) ;
		\coordinate (z) at (4.5,4) ;
		\coordinate (x) at (0.5,4) ;
		\coordinate  (y) at (2.5,5.5);
		\draw[fill=black] (x) circle [radius=0.15cm];
		\draw[fill=black] (y) circle [radius=0.15cm];
		\draw[fill=black] (z) circle [radius=0.15cm];
		\draw[fill=black] (w) circle [radius=0.15cm];
		\draw[fill=black] (v) circle [radius=0.15cm];
		\draw[fill=black] (j) circle [radius=0.15cm];
		\draw[fill=black] (k) circle [radius=0.15cm];
		\coordinate (j') at (9,0);
		\coordinate (k') at (14,0);
		\coordinate (v') at (10,1.5);
		\coordinate (w') at (13,1.5);
		\coordinate (z') at (13.5,4);
		\coordinate (x') at (9.5,4);
		\coordinate (y') at (11.5,5.5);
		\draw[fill=black] (j') circle [radius=0.15cm];
		\draw[fill=black] (x') circle [radius=0.15cm];
		\draw[fill=black] (y') circle [radius=0.15cm];
		\draw[fill=black] (z') circle [radius=0.15cm];
		\draw[fill=black] (w') circle [radius=0.15cm];
		\draw[fill=black] (v') circle [radius=0.15cm];
		\draw[fill=black] (j') circle [radius=0.15cm];
		\draw[fill=black] (k') circle [radius=0.15cm];
		\coordinate (j'') at (18+3,0);
		\coordinate (k'') at (23+3,0);
		\coordinate (v'') at (19+3,1.5);
		\coordinate (w'') at (22+3,1.5);
		\coordinate (z'') at (22.5+3,4);
		\coordinate (x'') at (18.5+3,4);
		\coordinate (y'') at (20.5+3,5.5);
		\draw[fill=black] (j'') circle [radius=0.15cm];
		\draw[fill=black] (x'') circle [radius=0.15cm];
		\draw[fill=black] (y'') circle [radius=0.15cm];
		\draw[fill=black] (z'') circle [radius=0.15cm];
		\draw[fill=black] (w'') circle [radius=0.15cm];
		\draw[fill=black] (v'') circle [radius=0.15cm];
		\draw[fill=black] (j'') circle [radius=0.15cm];
		\draw[fill=black] (k'') circle [radius=0.15cm];
		\draw (v)--(j)--(k)--(w)--(y)--(v)--(z)--(x)--(w);
		\draw (v')--(j')--(k')--(w')--(y')--(v')--(z')--(x')--(w');
		\draw (v'')--(j'')--(k'')--(w'')--(y'')--(v'')--(z'')--(x'')--(w'');
		\draw (k)--(j');
		\draw (z)--(x');
		\draw (z')--(16,4);
		\draw (k')--(16,0);
		\draw (19.5,4)--(x'');
		\draw (19,0)--(j'');
		
		\draw (-2,0)--(j);
		\draw (-1.5,4)--(x);
		\draw (y)--(2.5,8);
		\draw (y')--(11.5,8);
		\draw (y'')--(20.5+3,8);
		\draw (k'')--(25+3,0);
		\draw (z'')--(24.5+3,4);
		\node at (4.8,0.8) {$3$};
		\node at (0.2,0.8) {$1$};
		\node at (2.5,0.3) {$4$};
		\node at (-1,0.3) {$3$};
		\node at (7,0.3) {$1$};
		\node at (2.1,3.1) {$4$};
		\node at (2.8,3.1) {$5$};
		
		\node at (2.5,4.3) {$1$};
		\node at (7,4.3) {$2$};
		\node at (1.6,3.6) {$2$};
		\node at (3.4,3.6) {$1$};
		\node at (-0.5,4.3) {$3$};
		\node at (11.5,4.3) {$1$};
		\node at (16-1.5,4.3) {$2$};
		
		\node at (10.6,3.6) {$1$};
		\node at (12.4,3.6) {$2$};
		\node at (11.1,3.1) {$5$};
		\node at (11.8,3.1) {$5$};
		\node at (13.8,0.8) {$1$};
		\node at (9.2,0.8) {$2$};
		\node at (11.5,0.3) {$5$};
		\node at (16-1,0.3) {$2$};
		
		\node at (20.5+3,0.3) {$5$};
		\node at (18.2+3,0.8) {$1$};
		\node at (22.8+3,0.8) {$2$};
		\node at (24+3,0.3) {$1$};
		\node at (20.8+3,3.1) {$5$};
		\node at (20.1+3,3.1) {$5$};
		\node at (21.4+3,3.6) {$1$};
		\node at (19.6+3,3.6) {$2$};
		\node at (20.5+3,4.3) {$1$};
		\node at (23.5+3,4.3) {$2$};
		\node at (11.2,6.5) {$5$};
		\node at (2.2,6.5) {$5$};
		\node at (20.2+3,6.5) {$5$};
		
		\node at (21-1,0.3)  {$2$};
		\node at (20.7,4.3) {$2$};
		\coordinate (d) at (16,2);
		\draw[dashed]  (d)--(20,2);
		\end{tikzpicture}
		
		\caption{A normal $5$ edge-coloring in the third subsequence of $G_k^B$} \label{Loupekhine third subseq}
	\end{figure}
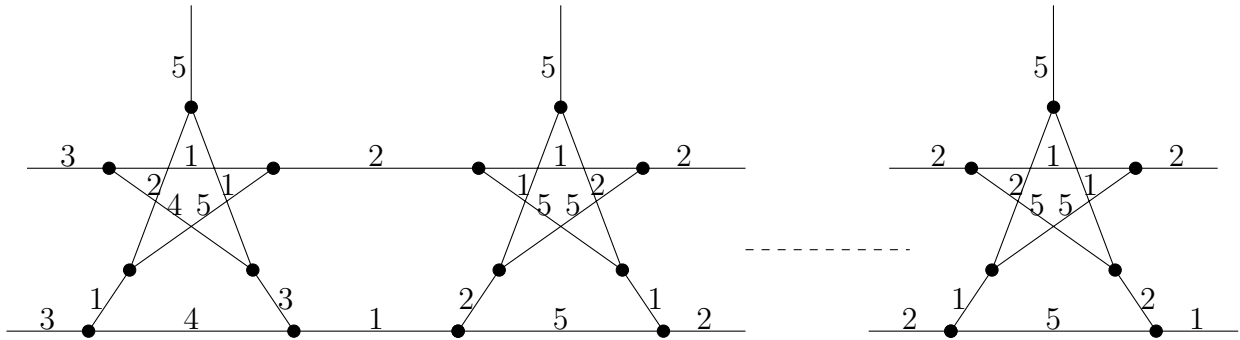	

By a direct check, the defined coloring is normal for each of the previous subgraphs and it remains normal also when we consider the entire subgraph $G_k^B$. Moreover, all the $k$ vertices of degree $2$ are incident to edges with colors $1$ and $2$. 

Now, consider a subgraph $G_k^C$ which satisfies the additional property that {\bf every copy of $K_2$ connects two vertices of $G_k^B$ in the same subsequence.}

Color every edge of $G^C_k$ incident a vertex of the first subsequence with the color $3$, incident a vertex of the second subsequence with the color $4$ and incident a vertex of the third subsequence with the color $5$. 

For all $G_k^C$ which satisfy the additional constraint, the presented coloring is a normal $5$-edge-coloring of $G_k^L$. 
Then, we are now in position to state our main result.

\begin{theorem}\label{thm:mainresult}
Let $G_k^L$ be an $LP_1$-snark and let $B_1 \dots B_k$ be the $k$ blocks in $G_k^B$.
If the $k$ blocks can be partitioned in three subsequences of odd length in such a way that every copy of $K_2$ in $G_k^C$ connects two vertices of $G_k^B$ in the same subsequence, then $G_k^L$ admits a normal $5$-edge-coloring.
\end{theorem}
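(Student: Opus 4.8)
\section*{Proof proposal}

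The plan is to prove the theorem by \emph{verifying the explicit $5$-edge-coloring} already exhibited in Figures~\ref{Loupekhine first subseq}, \ref{Loupekhine second subseq} and \ref{Loupekhine third subseq}, together with the stated colouring rule for $G_k^C$. Since a normal $5$-edge-coloring is, by Definitions~\ref{def:poorrich} and \ref{def:normal}, a proper edge-coloring in which every edge is poor or rich, the argument splits into two tasks: checking that the assignment is a proper edge-coloring of the whole graph $G_k^L$, and checking that each of its edges is normal. I would organise the edges into three classes and treat them separately: the edges internal to a single block $B_i\cong B(P)$, the two matching edges joining consecutive blocks inside $G_k^B$, and the edges of $G_k^C$ created after each $w_i$ is identified with the corresponding $z_i$.

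The first key step is to isolate the invariant that makes everything fit together: \emph{every degree-$2$ vertex $w_i$ of $G_k^B$ is incident, inside $G_k^B$, to one edge of colour $1$ and one edge of colour $2$}. I would read this off directly from the three templates, which is a finite check because the colouring is periodic along each subsequence. The three templates differ essentially only in how the colours $\{3,4,5\}$ are distributed around the neighbourhood of $w_i$, and for a block in the $t$-th subsequence the template is designed so that, once the $G_k^C$-edge of colour $2+t$ is added at $w_i$, the color sets of its neighbours force the two block edges at $w_i$ to become poor or rich. This is precisely why three subsequences with three distinct colours $3,4,5$ are needed, and I would verify this compatibility one template at a time.

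Next I would handle the matching edges and the $G_k^C$-edges. For a matching edge joining two blocks inside the same subsequence, normality is a local check over one period of the template, so the only genuinely new configurations occur at the \emph{boundaries between consecutive subsequences} and at the cyclic wrap-around from $B_k$ to $B_1$. For the $G_k^C$-edges the hypothesis enters exactly here. If a component is a $K_2$ joining $w_i$ and $w_j$ in the same subsequence $t$, then the rule assigns the colour $2+t$ consistently from both ends, so $S_c(w_i)=S_c(w_j)=\{1,2,2+t\}$ and the edge is poor. If a component is a star, each of its three edges is coloured $2+t$ according to the subsequence of its leaf; one checks that the centre is properly coloured, which forces its three leaves to lie in pairwise distinct subsequences and gives the centre colour set $\{3,4,5\}$, so that each star edge, joining the centre to a leaf with set $\{1,2,2+t\}$, sees all five colours and is rich. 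Combined with the block-internal checks, this shows that every edge of $G_k^L$ is normal.

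I expect the main obstacle to be the \emph{boundary and parity bookkeeping}. The colouring is only asserted to be normal by a direct check within each subsequence, so the real work is to confirm that the matching edges bridging the last block of one subsequence to the first block of the next, and cyclically the bridge from $B_k$ back to $B_1$, remain proper and normal even though the two sides use different templates; this is where the requirement that each subsequence have \emph{odd} length is essential, since it is the parity that lets the alternating interior pattern close up so as to preserve the $\{1,2\}$-invariant at every $w_i$ across each junction. A secondary delicate point, which I would state explicitly alongside the $K_2$-hypothesis, is the compatibility condition that properness at the centre of each star imposes, namely that its three leaves occupy three distinct subsequences.
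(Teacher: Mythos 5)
Your proposal follows essentially the same route as the paper: the paper's ``proof'' of Theorem~\ref{thm:mainresult} is precisely the explicit template colorings of Figures~\ref{Loupekhine first subseq}--\ref{Loupekhine third subseq}, verified by direct check, together with the invariant that every degree-$2$ vertex of $G_k^B$ is incident to colors $1$ and $2$ and the rule coloring each edge of $G_k^C$ with $3$, $4$ or $5$ according to the subsequence of its endpoint. Your additional observation that properness at the centre of a star component forces its three leaves to lie in three pairwise distinct subsequences is a genuine point: the paper leaves this condition implicit in the theorem's hypotheses (it is satisfied in the family from the Manuel--Shanthi construction), and making it explicit as you suggest would tighten the statement.
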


Now, let us show that the class of $LP_1$-snarks presented in \cite{ManSha} satisfy the assumptions of Theorem \ref{thm:mainresult}. 

For any odd $k$, the $LP_1$-snark considered in \cite{ManSha} is the one having the following decomposition in $G_k^C$ and $G_k^B$.

The subgraph $G_k^C$ has a unique star and denote its three vertices of degree one by $z_{1}, z_{\lfloor{k/2}\rfloor+1}$ and $z_{\lfloor{k/2}\rfloor+2}$. Moreover, each copy of $K_2$ connects pairwise the vertices $z_{i+2},z_{k-i}$, for $i=0, \dots \lfloor{k/2}\rfloor-2$. 

We use the same patterns of coloring in Figures \ref{Loupekhine first subseq},
\ref{Loupekhine second subseq}, \ref{Loupekhine third subseq} where  $B_{\lfloor{k/2}\rfloor+3},...,B_{1},...B_{\lfloor{k/2}\rfloor}$ is the first subsequence, the unique block $B_{\lfloor{k/2}\rfloor+1}$ gives the second subsequence and the unique block $B_{\lfloor{k/2}\rfloor+2}$ is the third one.

Hence, every element of this family of $LP_1$-snarks satisfies the assumptions of Theorem \ref{thm:mainresult} and then it is contained in our larger class of $LP_1$-snarks. 

\subsection{Loupekhine snarks of the second type}

Here we briefly explain how we can slightly modify the coloring presented above in order to exhibit a normal $5-$edge-coloring also for the class of Loupekhine snarks of the second kind \cite{construction Loupekhine}. 
First of all, let us define this class. We will call \textit{twist} the operation that replace two edges between two consecutive blocks in $G^B_k$ in the way described in Figure \ref{Loupekhine second}.

\begin{definition}
An $LP_2$-snark is a snark which is obtained by an $LP_1$-snark with a twist of the two edges between two consecutive selected blocks.
\end{definition}

\begin{flushright}
	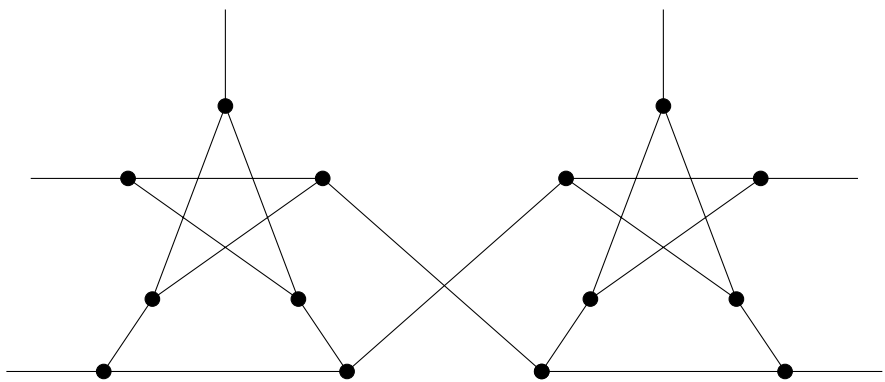
\begin{figure}[!htbp]
	\begin{center}
		\begin{tikzpicture}[scale=0.64]
		\coordinate (j) at (0,0);
		\coordinate (k)  at (5,0) ;
		\coordinate (v) at (1,1.5) ;
		\coordinate (w) at (4,1.5) ;
		\coordinate (z) at (4.5,4) ;
		\coordinate (x) at (0.5,4) ;
		\coordinate  (y) at (2.5,5.5);
		\draw[fill=black] (x) circle [radius=0.15cm];
		\draw[fill=black] (y) circle [radius=0.15cm];
		\draw[fill=black] (z) circle [radius=0.15cm];
		\draw[fill=black] (w) circle [radius=0.15cm];
		\draw[fill=black] (v) circle [radius=0.15cm];
		\draw[fill=black] (j) circle [radius=0.15cm];
		\draw[fill=black] (k) circle [radius=0.15cm];
		\coordinate (j') at (9,0);
		\coordinate (k') at (14,0);
		\coordinate (v') at (10,1.5);
		\coordinate (w') at (13,1.5);
		\coordinate (z') at (13.5,4);
		\coordinate (x') at (9.5,4);
		\coordinate (y') at (11.5,5.5);
		\draw[fill=black] (j') circle [radius=0.15cm];
		\draw[fill=black] (x') circle [radius=0.15cm];
		\draw[fill=black] (y') circle [radius=0.15cm];
		\draw[fill=black] (z') circle [radius=0.15cm];
		\draw[fill=black] (w') circle [radius=0.15cm];
		\draw[fill=black] (v') circle [radius=0.15cm];
		\draw[fill=black] (j') circle [radius=0.15cm];
		\draw[fill=black] (k') circle [radius=0.15cm];
		\draw (v)--(j)--(k)--(w)--(y)--(v)--(z)--(x)--(w);
		\draw (v')--(j')--(k')--(w')--(y')--(v')--(z')--(x')--(w');
		\draw (k)--(x');
		\draw (z)--(j');
		\draw (z')--(15.5,4);
		\draw (k')--(16,0);
		\draw (-2,0)--(j);
		\draw (-1.5,4)--(x);
		\draw (y)--(2.5,7.5);
		\draw (y')--(11.5,7.5);
		
		\end{tikzpicture}
		\end{center}
		
		\caption{A twist between two copies of $B(P)$.} 
		\label{Loupekhine second}
	\end{figure}
		
\end{flushright}	
Observe that we do not really construct new examples of snarks if we perform additional twists between other blocks of the snark so obtained. Indeed, it is easy to prove that an even number of twists produces a graph isomorphic to the original one. Hence, by applying an odd number of twists we obtain the same snark obtained by using a unique twist. 
Moreover, it is not relevant the pair of blocks where the twist is applied: indeed, given an $LP_1$-snark $G^L_k$, all $LP_2$-snarks obtained starting from $G_k^L$ with a twist between two arbitrary consecutive blocks are isomorphic.

Hence, without loss of generality, we can always apply the twist to the two edges with color $3$ between the last
block in the second subsequence and the first block in the third
subsequence. With such a configuration, we maintain clearly the same pattern of coloring described before for the entire graph and the following holds.

\begin{theorem}
Let $G_k^L$ be an $LP_2$-snark and let $B_1 \dots B_k$ be the $k$ blocks in $G_k^B$.
If the $k$ blocks can be partitioned in three subsequences of odd length in such a way that every copy of $K_2$ in $G_k^C$ connects two vertices of $G_k^B$ in the same subsequence, then $G_k^L$ admits a normal $5$-edge-coloring.
\end{theorem}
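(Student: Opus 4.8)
The plan is to reuse verbatim the normal $5$-edge-coloring constructed for the underlying $LP_1$-snark in Theorem~\ref{thm:mainresult} and to argue that a single twist cannot destroy normality. The remarks preceding the statement already reduce the problem substantially: since an even number of twists returns a graph isomorphic to the original $LP_1$-snark, and all graphs obtained from a fixed $G_k^L$ by one twist between consecutive blocks are mutually isomorphic, I may assume without loss of generality that the unique twist is applied to the two edges joining the last block of the second subsequence to the first block of the third subsequence. By the colorings of Figures~\ref{Loupekhine second subseq} and~\ref{Loupekhine third subseq}, both of these connecting edges carry color $3$, which is exactly why this placement is convenient.

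First I would record precisely what a twist changes. Let $a,b$ be the two (degree-$3$) vertices of the last block of the second subsequence incident to the connecting edges, and $a',b'$ the corresponding vertices of the first block of the third subsequence, so that in the $LP_1$-snark the connecting edges are $aa'$ and $bb'$, while after the twist (Figure~\ref{Loupekhine second}) they become $ab'$ and $ba'$. I retain the color of every edge exactly as in Theorem~\ref{thm:mainresult} and give color $3$ to both rerouted edges. Since each of $a,b,a',b'$ was incident to color $3$ before the twist and still is afterwards, the coloring stays proper and, crucially, the palette $S_c(v)$ is unchanged at \emph{every} vertex of the graph. Consequently the poor/rich status of every edge other than the two rerouted ones is literally inherited from the $LP_1$ coloring, hence normal; the only edges whose status could conceivably change are $ab'$ and $ba'$, because they now join different pairs of vertices.

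The heart of the argument is therefore the local verification that $ab'$ and $ba'$ are normal, governed by $S_c(a)\cup S_c(b')$ and $S_c(b)\cup S_c(a')$. Reading the palettes off Figures~\ref{Loupekhine second subseq} and~\ref{Loupekhine third subseq}, I expect the decisive coincidence that the two boundary vertices on the second-subsequence side share the palette $S_c(a)=S_c(b)=\{2,3,5\}$, while the two on the third-subsequence side share $S_c(a')=S_c(b')=\{1,3,4\}$. Granting this, each of the original unions $S_c(a)\cup S_c(a')$, $S_c(b)\cup S_c(b')$ and each of the rerouted unions $S_c(a)\cup S_c(b')$, $S_c(b)\cup S_c(a')$ equals $\{1,2,3,4,5\}$: the twist merely re-pairs vertices carrying identical palettes, so both connecting edges remain rich. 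Every other edge keeps its inherited status, and the result is a normal $5$-edge-coloring of the $LP_2$-snark.

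The single point that requires care — and the failure mode I must rule out — is that a rerouted edge could yield a union of size $4$, which is neither poor nor rich. The palette coincidence above is exactly what prevents this, and it holds precisely because the twist sits where both connecting edges already have the common color $3$ and where the two vertices on each side have equal palettes. I would close by noting that a twist acts only inside $G_k^B$, leaving the $K_2$- and star-components of $G_k^C$ and the edges running into $G_k^C$ (colored $3,4,5$ by subsequence in Theorem~\ref{thm:mainresult}) untouched, so their contribution to normality is inherited unchanged and the hypothesis on the $K_2$-components is used exactly as before.
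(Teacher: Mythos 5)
Your proposal is correct and follows essentially the same route as the paper: both place the twist (without loss of generality, using the isomorphism remarks) on the two color-$3$ edges between the last block of the second subsequence and the first block of the third, and keep the coloring of Theorem~\ref{thm:mainresult} unchanged. Your explicit palette check ($\{2,3,5\}$ on one side, $\{1,3,4\}$ on the other, so the rerouted edges stay rich) is exactly the verification the paper leaves implicit behind the phrase ``we maintain clearly the same pattern of coloring.''
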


We would like to stress that it remains largely open the general problem of proving that the normal chromatic number of any Loupekhine snark is five, and it is still open even if we restrict our attention to the class of $LP_1$-snarks.



\bibliographystyle{elsarticle-num}


\end{document}